\title{Codes from Jacobian surfaces}
\author{Safia Haloui}
\address{Department of Mathematics, Technical University of Denmark, Lyngby, Denmark}
\email{safia.haloui@gmail.com}
\date{\today}
\keywords{Abelian varieties over finite fields, algebraic curves, number of rational points, Algebraic Geometry codes.}
\subjclass[2010]{14G50, 14H40, 14K12, 11T71, 11G25.}
\newtheorem{theo}{Theorem}%[section]
\newtheorem{prop}[theo]{Proposition}
\theoremstyle{definition}
\theoremstyle{remark}
\newtheorem{rem}[theo]{Remark}
\begin{document}

\begin{abstract}
This paper is concerned with some Algebraic Geometry codes on Jacobians of genus $2$ curves. We derive a lower
bound for the minimum distance of these codes from an upper ``Weil type''  bound for
 the number of rational points on irreducible (possibly singular or non-absolutely 
irreducible) curves lying on an abelian surface over a finite field.
\end{abstract}

\maketitle
%%%%%%%%%%%%%%%%%%%%%%%%%%%%%%%%%%%%%%%%%%%%%%%%%%%%%%%
%%%%%%%%%%%%%%%%%%%%%%%%%%%%%%%%%%%%%%%%%%%%%%%%%%%%%%%
%About the Weil polynomials : generalisation en dim sup ?
%%%%%%%%%%%%%%%%%%%%%%%%%%%%%%%%%%%%%%%%%%%%%%%%%%%%%%%
\section{Introduction}

Algebraic Geometry codes were introduced by V.D. Goppa in the beginning of the 80's. While A.G. codes arising from curves have been extensively studied and are now fairly well-understood, only few things are known about the higher dimension case and it would be interesting to get new examples. 

\bigskip

This paper is concerned with \emph{evaluation codes} (in the sense of \cite{lit}) from Jacobian surfaces, defined by a very ample divisor numerically equivalent to a positive multiple of a divisor corresponding to the principal polarization. Given a smooth, projective, absolutely irreducible genus $2$ curve $C/\mathbb{F}_q$ and a positive integer $r$, the linear code $C(J_C,G)$ defined in Section \ref{DefEvCodes}, where $J_C$ is the Jacobian of $C$ and $G$ is a very ample divisor which is numerically equivalent to $rC$ has length
$$n=\# J_C(\mathbb{F}_q)=\frac{1}{2}\left(\# C(\mathbb{F}_{q^2})+\# C(\mathbb{F}_q)^2\right)-q.$$
We prove that if $J_C$ is simple over ${\mathbb F}_q$, then the minimum distance of $C(J_C,G)$ satisfies
$$d\geq \# J_C(\mathbb{F}_q)- \max \left\{\# C(\mathbb{F}_q)+(r^2-1)[2\sqrt{q}],\quad r\# C(\mathbb{F}_q)\right\}.$$
In particular, if
$$r\leq \frac{\# C(\mathbb{F}_q)}{[2\sqrt{q}]}-1,$$
then we have
\begin{eqnarray}\label{BorneMiniDist}
d\geq \# J_C(\mathbb{F}_q)-r\# C(\mathbb{F}_q).
\end{eqnarray}
Moreover, if this lower bound for the minimum distance is positive, then $C(J_C,G)$ has dimension
$$k=r^2.$$

The lower bound (\ref{BorneMiniDist}) goes in the direction of the intuitive idea that minimum weight codewords should come from effective divisors in the linear system of $G$ with the greatest possible number of components (see the proof of Theorem \ref{MinDistJac}). As an example, if $G=rC$ and there exists $P_1,\dots ,P_r\in J_C(\mathbb{F}_q)$ such that $P_1+\dots +P_r=0$, then, by the Theorem of the Square (see \cite{mil2},\cite{mum}), G is linearly equivalent to the sum $D=(C+P_1)+\dots +(C+P_r)$ of translates of $C$ and if the $(C+P_i)$'s do not intersect at rational points, then (\ref{BorneMiniDist}) is attained by the codewords arising from $D$. Notice that all the prime components of $D$ have arithmetic genus $2$, which is the smallest possible for an irreducible curve over $\mathbb{F}_q$ on $J_C$ in the case where $J_C$ is simple.

\bigskip

This paper is organized as follows: Section \ref{CourbeOnAbSurf} is devoted to the study of the number of rational points on projective irreducible (not necessary absolutely irreducible) curves lying on an abelian surface over a finite field. In particular, we give a ``Weil type bound'' for these curves, depending on the trace of the abelian variety.
In Section \ref{CodesJacobian}, after giving some basic definitions about A.G. codes on surfaces, we use the results from the former section to derive the lower bound on the minimum distance of $C(J_C,G)$ mentioned above.
%%%%%%%%%%%%%%%%%%%%%%%%%%%%%%%%%%%%%%%%%%%%%%%%%%%%%%%
%%%%%%%%%%%%%%%%%%%%%%%%%%%%%%%%%%%%%%%%%%%%%%%%%%%%%%%
%%%%%%%%%%%%%%%%%%%%%%%%%%%%%%%%%%%%%%%%%%%%%%%%%%%%%%%
%%%%%%%%%%%%%%%%%%%%%%%%%%%%%%%%%%%%%%%%%%%%%%%%%%%%%%%

\section{Curves on abelian surfaces over finite fields}\label{CourbeOnAbSurf}

\subsection{About the Weil polynomials}\label{AboutWP}

Let $A$ be an abelian variety of dimension $g$ defined over the finite field  ${\mathbb F}_q$ with $q = p^{a}$ elements, where $p$ is a prime number. The \emph{Weil polynomial} $f_A$ of $A$ is the characteristic polynomial of its Frobenius endomorphism acting on the $\mathbb{Q}_\ell$-vector space
$T_\ell(A)\otimes_{\mathbb{Z_\ell}}\mathbb{Q}_\ell$,
where $\ell$ is a prime number distinct from $p$ and
$$T_{\ell}(A)=\lim_{\leftarrow}A[\ell^{n}](\overline{\mathbb F_q})$$
is the Tate module of $A$; it gives an explicit characterization of the isogeny class of $A$. The Weil polynomial has the following property which is convenient for our purpose: if $A$ is the Jacobian of some smooth, projective, absolutely irreducible curve over ${\mathbb F}_q$, then the number of rational points on this curve can be easily derived from $f_A$. We will come back to that in the next section. Notice also that the number of rational points on $A$ is equal to $f_A(1)$. More information on Weil polynomials can be found in J.S. Milne's surveys \cite{mil2} and \cite{mil3}.

\bigskip

Now, we suppose that $A/{\mathbb F}_q$ is an abelian surface. Let $D$ be a projective, absolutely irreducible algebraic curve defined over $\mathbb{F}_q$, lying on $A$ (i.e. $D$ is a closed algebraic subvariety of $A$), and let $\widetilde{D}$ be the normalization of $D$. The composite of the normalization map $n :\widetilde{D}\rightarrow D $ with the inclusion $i:D\hookrightarrow A$  gives rise to a map $\widetilde{D}\rightarrow A $ which is birational onto its image. Since $i\circ n$ is not constant, the genus of $\widetilde{D}$ is nonzero (\cite{mil2}, Cor. 3.8).

If $\widetilde{D}$ has genus $1$, then the map $i\circ n$ is actually an embedding, and therefore $D$ is nonsingular. Indeed, in this case $\widetilde{D}$ has a structure of elliptic curve (any smooth, projective, absolutely irreducible genus $1$ curve over a finite field has a rational point), and the map $i\circ n$ is the composite of an homomorphism with a translation (\cite{mil2}, Cor. 2.2). Therefore, its image $D$ is a translate of an abelian subvariety of $A$, which must be nonsingular and the map $i\circ n$ is actually an embedding, since it is birational onto its image.

In this setting, a classical reasoning on the Tate modules gives us the possible Weil polynomials for $D$:

\begin{prop}\label{PolyCarEll}
Let $D/\mathbb{F}_q$ be a projective absolutely irreducible curve of geometric genus $1$ lying on an abelian surface $A/\mathbb{F}_q$. Then $D$ has a structure of elliptic curve and  $f_{D}$ divides $f_A$.
\end{prop}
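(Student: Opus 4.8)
The plan is to exploit the fact, already established in the preceding paragraphs, that a genus $1$ curve $D$ on $A$ is automatically nonsingular, is an elliptic curve, and is (a translate of) an abelian subvariety of $A$. Translating by a rational point does not change the Weil polynomial and turns the inclusion into a genuine homomorphism of abelian varieties, so without loss of generality I may assume that $i\circ n\colon \widetilde{D}=D\hookrightarrow A$ is an injective homomorphism of abelian varieties defined over $\mathbb{F}_q$. The whole statement then reduces to a divisibility of characteristic polynomials under an injection of abelian varieties, which is exactly where the Tate module enters.

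First I would recall that for any abelian variety $B/\mathbb{F}_q$ the Weil polynomial $f_B$ is the characteristic polynomial of the Frobenius $\pi_B$ acting on the $\mathbb{Q}_\ell$-vector space $V_\ell(B):=T_\ell(B)\otimes_{\mathbb{Z}_\ell}\mathbb{Q}_\ell$. The injective homomorphism $\varphi\colon D\hookrightarrow A$ induces, by functoriality of the Tate module, a $\mathbb{Q}_\ell$-linear map $V_\ell(\varphi)\colon V_\ell(D)\to V_\ell(A)$. The key point is that $V_\ell(\varphi)$ is injective: this follows because $\varphi$, being a closed immersion, has finite (indeed trivial) kernel, and an isogeny-type argument shows that a homomorphism of abelian varieties with finite kernel induces an injection on Tate modules after tensoring with $\mathbb{Q}_\ell$. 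Moreover $V_\ell(\varphi)$ is Frobenius-equivariant, i.e. $V_\ell(\varphi)\circ \pi_D = \pi_A\circ V_\ell(\varphi)$, since Frobenius commutes with morphisms defined over $\mathbb{F}_q$.

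The conclusion is then pure linear algebra: the image $W:=V_\ell(\varphi)\big(V_\ell(D)\big)$ is a $\pi_A$-stable subspace of $V_\ell(A)$, and $\pi_A$ restricted to $W$ is conjugate (via the injection $V_\ell(\varphi)$) to $\pi_D$ on $V_\ell(D)$. Hence the characteristic polynomial of $\pi_A\big|_W$ equals $f_D$, and since the characteristic polynomial of $\pi_A$ on the whole space $V_\ell(A)$ is the product of the characteristic polynomials on $W$ and on a complementary $\pi_A$-stable factor (or, more cheaply, since the characteristic polynomial of an operator restricted to an invariant subspace divides the characteristic polynomial on the ambient space), we obtain $f_D \mid f_A$. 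That $\varphi$ is an embedding, which identifies $\widetilde{D}$ with its image, guarantees that $\dim V_\ell(D)=2$ correctly sits inside $\dim V_\ell(A)=4$.

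The main obstacle, and the step deserving the most care, is establishing the injectivity of $V_\ell(\varphi)$ and its compatibility with the given geometric setup: one must pass cleanly from the birational map $i\circ n$ (which is merely birational onto its image in general) to an honest closed immersion of abelian varieties, using the genus $1$ hypothesis and the structural results quoted from \cite{mil2}. Once the problem is rephrased as an injective $\mathbb{F}_q$-homomorphism of abelian varieties, the Tate-module divisibility argument is standard; the subtlety lies entirely in justifying that rephrasing and in noting that translation by a rational point is harmless for the Weil polynomial.
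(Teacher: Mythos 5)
Your proof is correct, but at the decisive step it takes a genuinely different route from the paper's. The two arguments share the same reduction: translate $D$ by a rational point (harmless for the Weil polynomial) so that it becomes an abelian subvariety of $A$ defined over $\mathbb{F}_q$. The paper then invokes Poincar\'e complete reducibility (\cite{mil2}, Prop.~12.1) to produce a complementary abelian subvariety $B$ of $A$ and an isogeny $m\colon D\times B\rightarrow A$, which induces a Frobenius-equivariant isomorphism of $\mathbb{Q}_\ell$-vector spaces between $T_\ell(D)\otimes_{\mathbb{Z}_\ell}\mathbb{Q}_\ell\times T_\ell(B)\otimes_{\mathbb{Z}_\ell}\mathbb{Q}_\ell$ and $T_\ell(A)\otimes_{\mathbb{Z}_\ell}\mathbb{Q}_\ell$, hence the full factorization $f_A=f_Df_B$. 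You instead use only the inclusion itself: it induces a Frobenius-equivariant injection of $T_\ell(D)\otimes_{\mathbb{Z}_\ell}\mathbb{Q}_\ell$ into $T_\ell(A)\otimes_{\mathbb{Z}_\ell}\mathbb{Q}_\ell$ (and here your ``isogeny-type argument'' can be simplified: a closed immersion which is a homomorphism is injective on $\ell^n$-torsion points, hence injective on Tate modules, and injectivity survives tensoring with $\mathbb{Q}_\ell$), and you conclude by the linear-algebra fact that the characteristic polynomial of an operator restricted to an invariant subspace divides the characteristic polynomial on the ambient space. Your route is cheaper: it needs neither a complement nor any semisimplicity, only block-triangularity; accordingly you should drop your parenthetical alternative (the ``complementary $\pi_A$-stable factor''), since the existence of such a complement is precisely the nontrivial input your main argument manages to avoid. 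What the paper's heavier machinery buys is, first, the identification of the cofactor $f_A/f_D$ as the Weil polynomial of an actual abelian variety $B$, and, second, uniformity: the same decomposition technique carries the proof of Proposition~\ref{PolyCarGenreSup}, where the relevant map is a surjection rather than an inclusion, and where your argument would need to be supplemented by a (slightly less immediate) surjectivity statement on Tate modules.
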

\begin{proof}
Taking in account the discussion above, it remains to prove that $f_D$ divides $f_A$, and possibly modifying $D$ with a translation by a rational point (which does not change the Weil polynomial), we can assume that $D$ is an abelian subvariety of $A$. In this setting, there exists an abelian subvariety $B$ of $A$ such that the addition law of $A$ induces an isogeny
$$m: D\times B\rightarrow A$$
(see \cite{mil2}, Prop. 12.1). Now the map induced by  $m$ on the Tate modules gives rise to a ${\mathbb Q}_{\ell}$-vector space isomorphism
$$T_{\ell}(D)\otimes_{{\mathbb Z}_{\ell}}{\mathbb Q}_{\ell}\times T_{\ell}(B)\otimes_{{\mathbb Z}_{\ell}}{\mathbb Q}_{\ell}=T_{\ell}(D\times B)\otimes_{{\mathbb Z}_{\ell}}{\mathbb Q}_{\ell}{\tilde{\rightarrow}} T_{\ell}(A)\otimes_{{\mathbb Z}_{\ell}}{\mathbb Q}_{\ell}$$
(where $\ell$ is a prime number coprime to $q$), which commutes with the action of the Frobenius, since it arises from a rational map. This gives us the required factorization of the characteristic polynomial $f_A$ and completes the proof.
\end{proof}

In the case where $\widetilde{D}$ has genus greater than or equal to $2$, the following proposition gives us some information on the Weil polynomial of $J_{\widetilde{D}}$:

\begin{prop}\label{PolyCarGenreSup}
Let $D/\mathbb{F}_q$ be a projective absolutely irreducible curve lying on an abelian surface $A/\mathbb{F}_q$ and let $\widetilde{D}$ be its normalization. Suppose that $\widetilde{D}$ has a rational point and has genus greater than or equal to $2$, then $f_A$ divides $f_{J_{\widetilde{D}}}$.
\end{prop}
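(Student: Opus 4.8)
The plan is to use the rational point on $\widetilde{D}$ to produce a surjective homomorphism from $J_{\widetilde{D}}$ onto $A$, and then read off the divisibility of the Weil polynomials from the induced map on Tate modules, exactly in the spirit of the proof of Proposition \ref{PolyCarEll}, but with surjectivity playing the role of the isomorphism used there.

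First I would fix a rational point $P_0\in\widetilde{D}(\mathbb{F}_q)$ (this is where the hypothesis is used) and let $j:\widetilde{D}\hookrightarrow J_{\widetilde{D}}$ be the associated Abel--Jacobi embedding. The composite $i\circ n:\widetilde{D}\rightarrow A$ is defined over $\mathbb{F}_q$; after composing with the translation of $A$ sending $(i\circ n)(P_0)$ to $0$, I obtain a morphism $\phi:\widetilde{D}\rightarrow A$ with $\phi(P_0)=0$, still birational onto its image (a translation is an isomorphism), the image being a translate of $D$. By the universal (Albanese) property of the Jacobian (\cite{mil2}), $\phi$ factors as $\phi=\psi\circ j$ for a unique homomorphism of abelian varieties $\psi:J_{\widetilde{D}}\rightarrow A$.

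The key step, and the main obstacle, is to show that $\psi$ is surjective; this is precisely where the genus hypothesis enters. The image $\psi(J_{\widetilde{D}})$ is an abelian subvariety of $A$ containing the curve $\phi(\widetilde{D})$, so it has dimension $1$ or $2$. If it had dimension $1$ it would be an elliptic curve $E\subseteq A$, and the irreducible curve $\phi(\widetilde{D})$ would then have to equal $E$; but $\phi$ is birational onto its image, so $\widetilde{D}$ would be birational, hence isomorphic (both being smooth and projective), to $E$, forcing the genus of $\widetilde{D}$ to be $1$ and contradicting $g(\widetilde{D})\ge 2$. Therefore $\dim\psi(J_{\widetilde{D}})=2=\dim A$ and $\psi$ is surjective; everything else is formal.

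Finally, a surjective homomorphism of abelian varieties induces a surjective $\mathbb{Q}_\ell$-linear map
$$T_{\ell}(J_{\widetilde{D}})\otimes_{{\mathbb Z}_{\ell}}{\mathbb Q}_{\ell}\longrightarrow T_{\ell}(A)\otimes_{{\mathbb Z}_{\ell}}{\mathbb Q}_{\ell}$$
which commutes with the Frobenius, since it arises from a morphism defined over $\mathbb{F}_q$. Thus $T_{\ell}(A)\otimes_{{\mathbb Z}_{\ell}}{\mathbb Q}_{\ell}$ is a Frobenius-stable quotient of $T_{\ell}(J_{\widetilde{D}})\otimes_{{\mathbb Z}_{\ell}}{\mathbb Q}_{\ell}$, and the characteristic polynomial of the Frobenius on a quotient divides the characteristic polynomial on the whole space. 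Since these characteristic polynomials are $f_A$ and $f_{J_{\widetilde{D}}}$ respectively, this yields $f_A\mid f_{J_{\widetilde{D}}}$, as required. Equivalently, one may invoke Poincaré reducibility to write $J_{\widetilde{D}}$ up to isogeny as $A\times B$, and conclude using the multiplicativity of Weil polynomials under products together with their invariance under isogeny.
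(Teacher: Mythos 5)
Your proof is correct and follows essentially the same route as the paper: a rational point gives the Abel--Jacobi embedding, the universal property of $J_{\widetilde{D}}$ produces a homomorphism $\psi:J_{\widetilde{D}}\rightarrow A$, and the genus hypothesis rules out an image of dimension $0$ or $1$, so $\psi$ is surjective. The only difference is in the final formal step: where you invoke the standard fact that a surjection of abelian varieties induces a surjection on $T_\ell\otimes_{\mathbb{Z}_\ell}\mathbb{Q}_\ell$ (so that $f_A$, the characteristic polynomial of Frobenius on a stable quotient, divides $f_{J_{\widetilde{D}}}$), the paper proves this in place via Poincar\'e reducibility, splitting $J_{\widetilde{D}}$ up to isogeny as $B_0\times B_1$ with $B_1\rightarrow A$ an isogeny --- which is exactly the alternative you mention in your last sentence.
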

\begin{proof}
A rational point $P$ on $\widetilde{D}$ gives rise to an embedding $h^P:\widetilde{D}\hookrightarrow J_{\widetilde{D}}$ such that any morphism from $\widetilde{D}$ to an abelian variety sending $P$ to zero factors through $h^P$, followed by an homomorphism (see \cite{mil3}, Prop. 6.1). Applying this to the composite of the normalization map $n:\widetilde{D}\rightarrow D $ with the inclusion $i:D\hookrightarrow A$ (and possibly modifying $A$ by a translation, so that $P$ is mapped to zero), we get an homomorphism $\alpha : J_{\widetilde{D}}\rightarrow A$ such that the following diagram is commutative:
 $$ \xymatrix{
    \widetilde{D} \ar@{^{(}->}[r]^{h^P} \ar@{->>}[d]_n & J_{\widetilde{D}} \ar[d]^\alpha \\
    D \ar@{^{(}->}[r]_i & A
  }$$
The homomorphism $\alpha$ is surjective. Indeed, the image of $\alpha$ is an abelian subvariety of $A$, so it must be either the whole $A$, or an elliptic curve, or zero. The two last possibilities should be excluded, as the image of $\alpha$ contains $D$, which is birational to a nonsingular curve of genus greater than or equal to $2$.

Now let $B_0$ be the connected component of the kernel of $\alpha$ containing zero; this is an abelian subvariety of $J_{\widetilde{D}}$ (the image of $B_0$ by the Frobenius is a connected component of the kernel of $\alpha$, and contains zero because it is a rational point, thus $B_0$ is defined over $\mathbb{F}_q$). Then there exists an abelian subvariety $B_1$ of $J_{\widetilde{D}}$ such that the addition law of $J_{\widetilde{D}}$ induces an isogeny
$$m: B_0\times B_1\rightarrow J_{\widetilde{D}}$$
(\cite{mil2}, Prop. 12.1). Since $m$ is a finite map, we have $\dim B_0 + \dim B_1 =\dim J_{\widetilde{D}}$, and by the Theorem on the dimension of fibres (\cite{sha}, I.6.3), $\dim B_0 \geq \dim J_{\widetilde{D}}-\dim A=\dim J_{\widetilde{D}}-2$. Therefore, $\dim B_1 \leq 2$.

Now consider the map
$$\alpha\circ m\vert_{B_1}:B_1\rightarrow A.$$
Since $\alpha$ is surjective and $B_0$ is mapped to zero, it is surjective. We deduce that $\dim B_1 = 2$ and $\alpha\circ m\vert_{B_1}$ is an isogeny. 

Finally, as in the proof of Proposition \ref{PolyCarEll}, the maps induced by the isogenies $\alpha\circ m\vert_{B_1}$ and $m$ on the Tate modules, give rise to two ${\mathbb Q}_{\ell}$-vector space isomorphisms
$$T_{\ell}(B_1)\otimes_{{\mathbb Z}_{\ell}}{\mathbb Q}_{\ell}{\tilde{\rightarrow}} T_{\ell}(A)\otimes_{{\mathbb Z}_{\ell}}{\mathbb Q}_{\ell}$$
and
$$T_{\ell}(B_0)\otimes_{{\mathbb Z}_{\ell}}{\mathbb Q}_{\ell}\times T_{\ell}(B_1)\otimes_{{\mathbb Z}_{\ell}}{\mathbb Q}_{\ell}{\tilde{\rightarrow}} T_{\ell}(J_{\widetilde{D}})\otimes_{{\mathbb Z}_{\ell}}{\mathbb Q}_{\ell}$$
 which commute with the action of the Frobenius. This gives us the required factorization of the characteristic polynomial $f_{J_{\widetilde{D}}}$.
\end{proof}

%%%%%%%%%%%%%%%%%%%%%%%%%%%%%%%%%%%%%%%%%%%%%%%%%%%%%%%
%%%%%%%%%%%%%%%%%%%%%%%%%%%%%%%%%%%%%%%%%%%%%%%%%%%%%%%

\subsection{A Weil type bound for curves on an abelian surface}
As in the beginning of Section \ref{AboutWP}, let $A/{\mathbb F}_q$ be an abelian variety of dimension $g$. The Weil polynomial $f_A(t)$ of $A$ has degree $2g$, has integer coefficients and the set of its roots (with multiplicity) consists of couples of conjugated complex numbers $\omega_{1}, \overline{\omega}_{1}, \dots, \omega_{g}, \overline{\omega}_{g}$ of modulus $\sqrt{q}$. For $1 \leq i \leq g$, we set $x_i = -(\omega_{i} + \overline{\omega}_{i})$ and define
$$\tau(A) = - \sum_{i = 1}^{g} (\omega_{i} + \overline{\omega}_{i}) = \sum_{i=1}^{g} x_{i}.$$
We say that $A$ \emph{has trace} $-\tau (A)$.

\bigskip

If $A=J_D$ is the Jacobian of a smooth, projective, absolutely irreducible curve $D/{\mathbb F}_q$, then $f_A$ is the reciprocal polynomial of the numerator of the zeta function of $D$. This implies that the number of points on $D$ over ${\mathbb F}_{q^k}$ is 
$$\# D(\mathbb{F}_{q^k})  =  q^k +1- \sum_{i = 1}^{g} (\omega_{i}^k + \overline{\omega}_{i}^k).$$
In particular, the number of rational points on $D$ is
$$\# D(\mathbb{F}_q)  =  q+1+\tau (A).$$
Since $\vert{x_i}\vert \leq 2 \sqrt{q}$, $i=1,\dots ,g$, we have the famous \emph{Weil bound} 
$$\# D(\mathbb{F}_q) \leq q+1+2g \sqrt{q}.$$
It is actually possible to substitute  $2\sqrt{q}$ with its integer part $[2\sqrt{q}]$ in the Weil bound. Indeed, J.-P. Serre \cite{serre0} proved that for any algebraic integers $x_1,\dots , x_g\in [-2\sqrt{q} ;2\sqrt{q}]$ such that the coordinates of the $g$-tuple $( x_1,\dots ,x_{g})$ are permuted by the action of $\mbox{Gal}(\overline{\mathbb{Q}}/\mathbb{Q})$, we have
\begin{eqnarray}\label{BorneSerreWeil}
\sum_{i=1}^{g} x_{i} & \leq & g[2\sqrt{q}].
\end{eqnarray}

\bigskip

Later, Y. Aubry and M. Perret \cite{AubPer} generalized the bounds mentioned above to (possibly singular) projective, absolutely irreducible curves over finite fields. More precisely, they proved that if $D/\mathbb{F}_q$ is a projective, absolutely irreducible curve of arithmetic genus $\pi$ and geometric genus $g$ and $\widetilde{D}$ is the normalization of $D$, then we have
\begin{eqnarray}\label{AubPerSing}
\vert\# \widetilde{D}(\mathbb{F}_q)-\# D(\mathbb{F}_q)\vert\leq \pi -g.
\end{eqnarray}
Then, using the arguments mentioned in the discussion above, they deduced that we have
$$\# D(\mathbb{F}_q)\leq q+1+\pi [2\sqrt{q}].$$

\bigskip

Taking in account the results discussed in Section \ref{AboutWP} and in the beginning of this section, it is easy to derive a ``Weil type bound'' for projective, absolutely irreducible curves lying on an abelian surface, depending on the trace of the abelian surface. However, the hypothesis that the curve is absolutely irreducible is too strong for applications to coding theory; we need a result which holds for irreducible curves. 

In order to overcome this difficulty, we use some intersection theory. Let $A/{\mathbb F}_q$ be an abelian surface. Since $A$ is an algebraic group, its canonical divisor is zero (see \cite{sha}, III.6.3)  and therefore,  the Adjunction Formula (\cite{hart} Prop. V.1.5 and Exercise V.1.3) gives us that for any projective curve $D/\overline{\mathbb{F}_q}$ lying on $A$ of arithmetic genus $\pi$ the self-intersection of $D$ is
\begin{eqnarray}\label{Adjunction}
D^2=2\pi -2.
\end{eqnarray}
\begin{rem} \label{RqIntPos} If $D$ is absolutely irreducible, then $\pi$ is nonzero, since the genus of the normalization of $D$ is, and therefore, the right hand side of (\ref{Adjunction}) is non-negative. Therefore, the intersection number of any two effective divisors is always non-negative.
\end{rem}

\begin{theo}\label{NbRatPts} Let $A/{\mathbb F}_q$ be an abelian surface. If $\tau (A)\geq -q$ (for instance, this condition is always satisfied when $q\geq 16$), then for any projective irreducible curve $D/\mathbb{F}_q$ of arithmetic genus $\pi$ lying on $A$, we have
$$\# D(\mathbb{F}_q)\leq q+1+\tau (A)+\vert \pi -2\vert [2\sqrt{q}].$$
In particular, if $A$ is the Jacobian of a smooth, projective, absolutely irreducible genus $2$ curve $C/{\mathbb F}_q$ with a rational point, then we have
$$\# D(\mathbb{F}_q)\leq \# C(\mathbb{F}_q)+\vert \pi -2\vert [2\sqrt{q}].$$
\end{theo}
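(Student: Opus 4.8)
The plan is to split the argument according to whether $D$ is absolutely irreducible, handling the reducible case by intersection theory and feeding the Weil-polynomial divisibilities of Section \ref{AboutWP} into the point counts in the absolutely irreducible case. Throughout, write $g$ for the geometric genus of $D$ (the genus of $\widetilde D$); since $\widetilde D\to A$ is nonconstant one has $\pi\ge g\ge 1$, so $|\pi-2|$ only ever needs to be resolved as $\pi-2$ (when $\pi\ge 2$) or $1$ (when $\pi=1$). The hypothesis $\tau(A)\ge -q$ will enter only through its harmless consequence $q+1+\tau(A)\ge 1>0$.

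First suppose $D$ is absolutely irreducible. If $g=1$, the discussion preceding Proposition \ref{PolyCarEll} shows $D$ is an elliptic curve with $\pi=1$ and, by that proposition, $f_D\mid f_A$; writing the $x$-value of the complementary elliptic factor as $x'$, it is a rational integer with $x'\ge -[2\sqrt q]$, whence $\#D(\mathbb F_q)=q+1+\tau(A)-x'\le q+1+\tau(A)+[2\sqrt q]$, which is the desired bound since $|\pi-2|=1$. If $g\ge 2$ and $\widetilde D(\mathbb F_q)=\emptyset$, then $(\ref{AubPerSing})$ gives $\#D(\mathbb F_q)\le\pi-g\le\pi-2$ and we are done. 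If $g\ge 2$ and $\widetilde D$ has a rational point, I would apply Proposition \ref{PolyCarGenreSup}: the isogeny decomposition $J_{\widetilde D}\sim B_0\times B_1$ with $B_1\sim A$ yields $f_{J_{\widetilde D}}=f_{B_0}f_A$, so the $x$-values of $\widetilde D$ split into those of $A$, summing to $\tau(A)$, and those of $B_0$, which sum to at most $(g-2)[2\sqrt q]$ by Serre's bound $(\ref{BorneSerreWeil})$. Thus $\#\widetilde D(\mathbb F_q)\le q+1+\tau(A)+(g-2)[2\sqrt q]$, and combining with $(\ref{AubPerSing})$ and $\pi\ge g$ gives the claim after observing $\pi-g\le(\pi-g)[2\sqrt q]$.

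The heart of the argument is the case where $D$ is irreducible but not absolutely irreducible. Here I would pass to the absolutely irreducible components $D_1,\dots,D_s$ over $\overline{\mathbb F_q}$, which are conjugate and transitively permuted by the Frobenius, with $s\ge 2$. The key observation is that any $P\in D(\mathbb F_q)$, being Frobenius-fixed and lying on some component, lies on every $D_j$; hence $\#D(\mathbb F_q)\le\#(D_i\cap D_j)(\overline{\mathbb F_q})\le D_i\cdot D_j$ for each pair $i\ne j$, the intersection being proper. Because the components are conjugate they share a common arithmetic genus $\pi_0\ge 1$, so by $(\ref{Adjunction})$ the self-intersection expands as $2\pi-2=D^2=s(2\pi_0-2)+\sum_{i\ne j}D_i\cdot D_j$. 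Summing the pointwise bound over the $s(s-1)$ ordered pairs and discarding the nonnegative term $2s(\pi_0-1)$ yields $\#D(\mathbb F_q)\le 2(\pi-1)/(s(s-1))\le\pi-1$. This estimate, together with $q+1+\tau(A)\ge 1$ and $[2\sqrt q]\ge 1$, gives $\#D(\mathbb F_q)\le\pi-1\le q+\tau(A)+(\pi-2)[2\sqrt q]+1$ when $\pi\ge 2$, while for $\pi=1$ the bound forces $\#D(\mathbb F_q)=0$ and there is nothing to prove.

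Finally, the ``in particular'' statement is immediate: when $A=J_C$ with $C$ of genus $2$ possessing a rational point, one has $q+1+\tau(A)=\#C(\mathbb F_q)$, so the general bound reads $\#D(\mathbb F_q)\le\#C(\mathbb F_q)+|\pi-2|[2\sqrt q]$. I expect the main obstacle to be the reducible case: one must argue cleanly that rational points are forced into the mutual intersection of all geometric components, and then extract the uniform bound $\#D(\mathbb F_q)\le\pi-1$ from the adjunction-based self-intersection identity, keeping track of the nonnegativity of $\pi_0-1$ guaranteed by Remark \ref{RqIntPos}.
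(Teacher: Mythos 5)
Your proposal is correct and follows essentially the same route as the paper: the same case split (absolutely irreducible with $g=1$, then $g\geq 2$ with or without a rational point on $\widetilde{D}$, then the non-absolutely-irreducible case), the same inputs (Propositions \ref{PolyCarEll} and \ref{PolyCarGenreSup}, Serre's bound (\ref{BorneSerreWeil}), the Aubry--Perret inequality (\ref{AubPerSing}), and the adjunction formula (\ref{Adjunction})), and the same key trick of forcing rational points into the mutual intersection of the conjugate geometric components. The only (harmless) deviation is cosmetic: in the reducible case you sum the bound $\# D(\mathbb{F}_q)\leq D_i.D_j$ over all $s(s-1)$ ordered pairs before dividing, whereas the paper uses a single pair and bounds the remaining terms of $D^2$ below by zero via Remark \ref{RqIntPos}; both arguments land on $\# D(\mathbb{F}_q)\leq \pi-1$ and finish identically.
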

\begin{proof}
First, we prove the result when $D$ is absolutely irreducible. Suppose that this is the case, and write $\widetilde{D}$ for the normalization of $D$, $g$ for the genus of $\widetilde{D}$, and $x_1,\dots , x_g$ for the sums of two complex conjugated roots of the Weil polynomial $f_{J_{\widetilde{D}}}$ of the Jacobian $J_{\widetilde{D}}$, as defined at the beginning of this section.

If $g=1$, then Proposition \ref{PolyCarEll} asserts that $\pi =1$ as well, and $f_{D}=f_{J_{\widetilde{D}}}$ divides $f_A$. Then $\tau (A)=\tau (D)+x_2$ for some integer $x_2\geq -[2\sqrt q]$ and therefore, $\# D(\mathbb{F}_q)=q+1+\tau (A)-x_2\leq q+1+\tau (A)+[2\sqrt q]$.

Now suppose that $g\geq 2$. If $\widetilde{D}$ has no rational point, then the proof of the proposition is straightforward from (\ref{AubPerSing}), since $\pi -g\leq \pi -2$. Therefore, we can assume that $\widetilde{D}$ has a rational point. By Proposition \ref{PolyCarGenreSup}, the Weil polynomial $f_A$ divides $f_{J_{\widetilde{D}}}$ so we can renumber the $x_i$'s so that $x_1, x_2$ correspond to $f_A$. With this notations, we have
\begin{eqnarray}\label{nbD}
\# \widetilde{D}(\mathbb{F}_q)=q+1+\tau (A)+\sum_{i=3}^{g}x_i.
\end{eqnarray}
The coordinates of $( x_3,\dots ,x_{g})$ are permuted by the action of  $\mbox{Gal}(\overline{\mathbb{Q}}/\mathbb{Q})$, since the coordinates of $(x_1,x_{2})$ and $(x_1,\dots ,x_{g})$ are. Therefore, by (\ref{BorneSerreWeil}), the sum in the right hand side of (\ref{nbD}) is less than or equal to $(g-2)[2\sqrt q]$. Then, the Aubry-Perret Inequality (\ref{AubPerSing}) gives us
\begin{eqnarray*}
\# D(\mathbb{F}_q) & \leq & \# \widetilde{D}(\mathbb{F}_q)+\pi -g\\
 & \leq & q+1+\tau (A)+(g-2)[2\sqrt q]+(\pi -g)[2\sqrt q]\\
 & = & q+1+\tau (A)+(\pi -2)[2\sqrt q].
\end{eqnarray*}

\bigskip

It remains to prove the result when $D$ is reducible over $\overline{\mathbb{F}_q}$. Suppose that this is the case and write $D_1,\dots ,D_k$ for its absolutely irreducible components. Then the absolute Galois group $\mbox{Gal}(\overline{\mathbb{F}_q}/\mathbb{F}_q)$ acts transitively on the set of the $D_i$'s (because an orbit of this action is an irreducible component of $D$ over $\mathbb{F}_q$). A rational point on $D$ is stable by the action of $\mbox{Gal}(\overline{\mathbb{F}_q}/\mathbb{F}_q)$, so it must lie on the intersection of all the $D_i$'s. It follows that
\begin{eqnarray*}
\# D(\mathbb{F}_q) & \leq & \#\left(\bigcap_{i=1}^kD_i\right)(\overline{\mathbb{F}_q})\\
 & \leq & \#\left( D_1\cap D_2\right)(\overline{\mathbb{F}_q})\\
 & \leq & D_1.D_2=\frac{1}{2}(D_1.D_2+D_2.D_1)\\
 & \leq & \frac{1}{2}\sum_{i=1}^k\sum_{j=1}^kD_i.D_j\\
 & = & \frac{1}{2}D^2\\
 & = & \pi -1=1+\pi -2\\
 & \leq & q+1+\tau (A)+(\pi -2)[2\sqrt{q}],
\end{eqnarray*}
where the 4th row is deduced from Remark \ref{RqIntPos}, the 6th row comes from (\ref{Adjunction}) and the last row comes from the assumption that $\tau (A)\geq -q$. This concludes the proof.
\end{proof}

%\begin{rem}
%According to the proof above, if we make the stronger hypothesis that $D$ is absolutely irreducible, the we can get read of the condition $\tau (A)\geq -q$. Moreover, it is easy to see that under this additional assumption, we have the ``symmetric lower bound''
%$$\# D(\mathbb{F}_q)\geq q+1+\tau (A)-\vert \pi -2\vert [2\sqrt{q}].$$
%In particular, these two bounds are equalities for $\pi =2$. This lower bound does not necessarily hold for non-absolutely irreducible curves. For instance, (translation d'une )
%\end{rem}

%%%%%%%%%%%%%%%%%%%%%%%%%%%%%%%%%%%%%%%%%%%%%%%%%%%%%%%
%%%%%%%%%%%%%%%%%%%%%%%%%%%%%%%%%%%%%%%%%%%%%%%%%%%%%%%
%%%%%%%%%%%%%%%%%%%%%%%%%%%%%%%%%%%%%%%%%%%%%%%%%%%%%%%
%%%%%%%%%%%%%%%%%%%%%%%%%%%%%%%%%%%%%%%%%%%%%%%%%%%%%%%

\section{Codes from Jacobian surfaces}\label{CodesJacobian}

\subsection{Evaluation codes on algebraic surfaces}\label{DefEvCodes}
Let $X$ be a smooth, projective, absolutely irreducible algebraic surface defined over $\mathbb{F}_q$. Any  divisor $G$ on $X$ rational over $\mathbb{F}_q$ (i.e. $G$ is invariant under the action of $\mbox{Gal}(\overline{\mathbb{F}_q}/\mathbb{F}_q)$) such that the Riemann-Roch space 
$$L(G)=\{f\in \mathbb{F}_q(X)\setminus\{ 0\}\vert \mbox{Div}(f)+G\geq 0\}\cup\{ 0\}$$
is non-trivial defines a rational map to a projective space in the usual way: choose a basis $\mathcal{B}=\{ X_0,\dots ,X_{\ell -1}\}$ of the $\ell$-dimensional $\mathbb{F}_q$-vector space $L(G)$ and define 
\begin{eqnarray*}
\varphi_G : \quad X & \rightarrow & \mathbb{P}^{\ell -1} \\
P& \mapsto & (X_0(P):\dots :X_{\ell -1}(P))
\end{eqnarray*}
(a different choice of $\mathcal{B}$ would just change the projective space by an automorphism). By definition, $G$ is \emph{very ample} if $\varphi_G$ is an embedding. %, and conversely, any embedding of $X$ to a projective space (maximal???) is coming from a some divisor.
In what will follow, we assume that $G$ is very ample and see $X$ as an algebraic subvariety of $\mathbb{P}^{\ell -1}$, via $\varphi_G$. 

\bigskip

First, notice that the choice of the basis $\mathcal{B}$ induces an identification of $L(G)$ with the $\mathbb{F}_q$-vector space of linear forms in $\ell$ variables. For $f\in L(G)$, we denote by $\widetilde{f}$ the linear form such that $f=\widetilde{f}( X_0,\dots ,X_{\ell -1})$.

Now, let $X(\mathbb{F}_q)=\{ P_1,\dots ,P_n\}$ be an enumeration of the rational points on $X$ and for $i=1,\dots ,n$ and $j=0,\dots ,\ell -1$, fix some $P_{i,j}\in\mathbb{F}_q$ such that $P_i=(P_{i,0}:\dots : P_{i,\ell -1})$. This choice defines a linear map
\begin{eqnarray*}
ev: L(G) & \rightarrow & \mathbb{F}_q^n \\
f& \mapsto & (\widetilde{f}(P_{1,0},\dots ,P_{1,\ell -1}),\dots ,\widetilde{f}(P_{n,0},\dots ,P_{n,\ell -1})).
\end{eqnarray*}
The \emph{evaluation code} $C(X,G)$ is defined to be the image of the map $ev$. It is independent from the choice of $\mathcal{B}$ and of the coordinates $P_{i,j}$.

\bigskip

Given a function $f\in L(G)\setminus\{ 0\}$, we can consider the effective rational divisor 
$$D_f=\mbox{Div}(f)+G.$$
The divisor $D_f$ is the pullback of the hyperplane divisor of $\mathbb{P}^{\ell -1}$ defined by $\widetilde{f}$ (see \cite{sha}, III.1.4).  Therefore, the rational points on its support are exactly the $P_i$'s for which $\widetilde{f}(P_{i,0},\dots ,P_{i,\ell -1})=0$. 

As $D_f$ is rational over $\mathbb{F}_q$, the absolute Galois group $\mbox{Gal}(\overline{\mathbb{F}_q}/\mathbb{F}_q)$ acts on the set of its prime components, and letting $D_{1,f},\dots ,D_{k,f}$ be the sums of the elements in each of the $k$ orbits, we can write
\begin{eqnarray}\label{DecompDiv}
D_f=\sum_{i=1}^kn_iD_{i,f},
\end{eqnarray}
for some positive integers $n_1,\dots ,n_k$. The $D_{i,f}$'s define irreducible curves over $\mathbb{F}_q$ (which can be possibly reducible over $\overline{\mathbb{F}_q}$), again denoted by $D_{i,f}$. The discussion above shows that the number of zero coordinates of the codeword $ev(f)$ is at most
$$N(f)=\sum_{i=1}^k\# D_{i,f}(\mathbb{F}_q).$$
Therefore, the minimum distance of $C(X,G)$ satisfies
\begin{eqnarray}\label{MinDist}
d\geq \# X(\mathbb{F}_q)-\max_{f\in L(G)\setminus\{ 0\}}N(f).
\end{eqnarray}

%%%%%%%%%%%%%%%%%%%%%%%%%%%%%%%%%%%%%%%%%%%%%%%%%%%%%%%
%%%%%%%%%%%%%%%%%%%%%%%%%%%%%%%%%%%%%%%%%%%%%%%%%%%%%%%

\subsection{On the parameters of some codes on Jacobian surfaces}

Now, we focus our attention on the case where $X=J_C$ is the Jacobian of a smooth, projective, absolutely irreducible genus $2$ curve $C/\mathbb{F}_q$ with at least one rational point. Such a rational point defines an embedding $C\hookrightarrow J_C$, so we can consider $C$ as a divisor on $J_C$. The aim of this section is to derive from (\ref{MinDist}) a lower bound for the minimal distance of codes of the form $C(J_C,G)$, where $G$ is a very ample divisor which is numerically equivalent to a positive multiple $rC$ of $C$.  Notice that the divisor $rC$ is very ample for $r\geq 3$ (\cite{mum}, III.17).
%For instance, we can take $G$ to be a sum of translates of $C$.

\bigskip

If the right hand side of (\ref{MinDist}) is positive, then the map $ev$ must be injective and the dimension of $C(J_C,G)$ is equal to the dimension of the $\mathbb{F}_q$-vector space $L(G)$ defined in the former section. This last quantity can be computed using the Riemann-Roch Theorem for surfaces, which takes a particularly simple form in the case of very ample divisors on abelian surfaces (see \cite{mil2}, Th. 13.3 and \cite{mum}, III.16): 
$$\dim_{\mathbb{F}_q}L(G)  =  G^2/2.$$
Using (\ref{Adjunction}), we deduce that if the right hand side of (\ref{MinDist}) is positive, we have
\begin{eqnarray*}
\dim_{\mathbb{F}_q}C(J_C,G) & = & G^2/2\\
 & = & r^2C^2/2\\
 & = & r^2.
\end{eqnarray*}

\bigskip

We now give an estimate of the minimum distance of $C(J_C,G)$. In order to apply (\ref{MinDist}), we need some preliminary result about the components of an effective divisor linearly equivalent to $G$.

\begin{prop}\label{NbComp}
Let $D$ be an effective divisor rational over $\mathbb{F}_q$ and linearly equivalent to $G$, let
$$D=\sum_{i=1}^kn_iD_{i}$$
be its decomposition as a sum of orbits, as in (\ref{DecompDiv}) and let $\pi_i$ be the arithmetic genus of $D_i$, $i=1,\dots ,k$. 
\begin{enumerate}
\item We have
$$\sum_{i=1}^kn_i\sqrt{\pi_i-1}\leq r.$$

\item If $J_C$ is simple, then $\pi_i\geq 2$, $i=1,\dots ,k$ and therefore $k\leq r$.
\end{enumerate}
\end{prop}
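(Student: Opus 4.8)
The plan is to exploit the intersection theory on the abelian surface, in particular the self-intersection formula (\ref{Adjunction}) and the non-negativity of intersection numbers from Remark \ref{RqIntPos}, together with the numerical equivalence $D\equiv G\equiv rC$. For part (1), I would start from the observation that $D^2=(rC)^2=r^2C^2=2r^2$, since $C^2=2\cdot 2-2=2$ by (\ref{Adjunction}) applied to the genus $2$ curve $C$. Next I would expand the self-intersection of $D=\sum_{i=1}^k n_iD_i$ bilinearly as
$$D^2=\sum_{i=1}^k\sum_{j=1}^k n_in_j\,D_i.D_j.$$
The key inequality will come from dropping all cross terms $D_i.D_j$ with $i\neq j$, which are non-negative by Remark \ref{RqIntPos} (each $D_i$ is an effective divisor, so every pairwise intersection number is $\geq 0$); this yields
$$D^2\geq \sum_{i=1}^k n_i^2\,D_i^2=\sum_{i=1}^k n_i^2(2\pi_i-2),$$
again using (\ref{Adjunction}) for each component $D_i$. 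Combining, $2r^2\geq 2\sum_{i=1}^k n_i^2(\pi_i-1)$, i.e. $\sum_{i=1}^k n_i^2(\pi_i-1)\leq r^2$.

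The remaining issue for part (1) is that the desired conclusion is stated for $\sum n_i\sqrt{\pi_i-1}$, not for $\sum n_i^2(\pi_i-1)$, so I need to pass from an $\ell^2$-type bound to an $\ell^1$-type bound. The natural tool is the Cauchy--Schwarz inequality, but applied directly it would introduce an unwanted factor of $\sqrt{k}$; the cleaner route is to note that for non-negative reals the square of a sum dominates the sum of squares, that is
$$\left(\sum_{i=1}^k n_i\sqrt{\pi_i-1}\right)^2=\sum_{i=1}^k n_i^2(\pi_i-1)+\sum_{i\neq j}n_in_j\sqrt{\pi_i-1}\sqrt{\pi_j-1}\geq \sum_{i=1}^k n_i^2(\pi_i-1).$$
This step quietly requires that each $\pi_i-1\geq 0$ so that the square roots are real and the cross terms are non-negative; since each $D_i$ is irreducible over $\mathbb{F}_q$ lying on the abelian surface, its arithmetic genus satisfies $2\pi_i-2=D_i^2\geq 0$ by Remark \ref{RqIntPos}, hence $\pi_i\geq 1$ and $\sqrt{\pi_i-1}$ is well-defined. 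Taken together with the previous paragraph, this gives $\left(\sum n_i\sqrt{\pi_i-1}\right)^2\leq 2r^2/2=r^2$, and taking square roots yields the claim. (I should double-check the precise value $C^2=2$ so that the factors of $2$ cancel correctly to land on $r$ rather than $r\sqrt2$.)

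For part (2), the plan is to use the hypothesis that $J_C$ is simple to rule out components of arithmetic genus $1$. If some $\pi_i=1$, then $D_i$ has geometric genus $0$ or $1$; but a rational curve cannot lie on an abelian variety (the map from its normalization would be constant by \cite{mil2}, Cor. 3.8), and a genus $1$ component would, after the analysis in Section \ref{AboutWP} and Proposition \ref{PolyCarEll}, produce an elliptic curve as an abelian subvariety of $J_C$, contradicting simplicity. Thus $\pi_i\geq 2$ for every $i$, so $\sqrt{\pi_i-1}\geq 1$ and part (1) immediately gives $k\leq \sum_{i=1}^k n_i\sqrt{\pi_i-1}\leq r$. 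I expect the main obstacle to be the genus~$1$ exclusion argument for components that are only irreducible over $\mathbb{F}_q$ (not absolutely irreducible): Proposition \ref{PolyCarEll} is stated for absolutely irreducible curves, so I would need to pass to an absolutely irreducible component $D_i'$ of $D_i$ over $\overline{\mathbb{F}_q}$, note that its normalization still maps non-constantly into the abelian surface, and argue that geometric genus $1$ forces a one-dimensional abelian subvariety, i.e. an elliptic curve, whose existence contradicts the simplicity of $J_C$.
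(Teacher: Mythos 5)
Your part (1) breaks down at its final step, and the error is not repairable by rearranging the algebra: the two inequalities you combine point in opposite directions. You correctly derive $\sum_{i=1}^k n_i^2(\pi_i-1)\leq r^2$ by dropping the non-negative cross terms, and you correctly observe that $\bigl(\sum_{i=1}^k n_i\sqrt{\pi_i-1}\bigr)^2\geq \sum_{i=1}^k n_i^2(\pi_i-1)$; but from $A\geq B$ and $B\leq r^2$ nothing follows about $A$ versus $r^2$. Indeed, an upper bound on the $\ell^2$-type quantity cannot control the $\ell^1$-type quantity: the data $k=r^2$, $n_i=1$, $\pi_i=2$ satisfies $\sum n_i^2(\pi_i-1)=r^2$ while $\sum n_i\sqrt{\pi_i-1}=r^2>r$, so no formal manipulation starting only from your $\ell^2$ bound can reach the conclusion. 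The cross terms you discarded are exactly what is needed. One repair inside your framework is to bound them from below instead of by zero: for $i\neq j$ the Hodge index theorem gives $(D_i.D_j)^2\geq D_i^2D_j^2$, and since $D_i.D_j\geq 0$ this yields $D_i.D_j\geq 2\sqrt{(\pi_i-1)(\pi_j-1)}$, whence $2r^2=D^2\geq 2\bigl(\sum_i n_i\sqrt{\pi_i-1}\bigr)^2$, which is the claim. The paper avoids cross terms altogether by a linear argument: it intersects with $C$ rather than with $D$, getting $2r=G.C=D.C=\sum_i n_iD_i.C$, and then applies the Hodge index theorem to $(C^2)D_i-(D_i.C)C$ to obtain $(D_i^2)(C^2)\leq (D_i.C)^2$, i.e. $D_i.C\geq 2\sqrt{\pi_i-1}$; summing finishes the proof.

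Part (2) also has a genuine gap, precisely at the point you flag but do not resolve. When $D_i$ is irreducible over $\mathbb{F}_q$ but not absolutely irreducible, your plan produces an elliptic curve in $J_C$ that is only defined over $\overline{\mathbb{F}_q}$ (a translate of a geometric component of genus $1$). This does not contradict the hypothesis, which is simplicity of $J_C$ over $\mathbb{F}_q$: an abelian surface simple over $\mathbb{F}_q$ can perfectly well contain elliptic curves defined over a finite extension (e.g. the Weil restriction to $\mathbb{F}_q$ of an elliptic curve over $\mathbb{F}_{q^2}$). The paper's proof uses simplicity in a different way. Writing $E_1$ for a geometric component of arithmetic genus $1$ and $E_2$ for its Frobenius image (a distinct component), it translates $E_1$ to a curve ${E'}_1$ through zero, which is then an abelian subvariety; simplicity over $\mathbb{F}_q$ forces ${E'}_1$ not to be defined over $\mathbb{F}_q$, hence ${E'}_1\neq {E'}_2=\mathrm{Frob}({E'}_1)$, yet both contain zero, so $E_1.E_2={E'}_1.{E'}_2\geq 1$. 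This makes the self-intersection of the $\mathbb{F}_q$-irreducible curve positive, and adjunction then gives arithmetic genus at least $2$. In short, simplicity is used not to forbid geometric elliptic curves, but to force distinct Frobenius conjugates to meet; your version of the argument needs this mechanism (or the stronger hypothesis that $J_C$ is absolutely simple, which is not what the paper assumes).
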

\begin{proof}
We have
\begin{eqnarray*}
2r & = & rC.C\\
 & = & G.C\\
 & = & D.C\\
 & = & \sum_{i=1}^kn_iD_{i}.C.
\end{eqnarray*}
Since $C$ is ample, we have
$$(D_i^2)(C^2)\leq (D_i.C)^2,$$
for $i=1,\dots ,k$ (apply the Hodge Index Theorem to $(C^2)D_i-(D_i.C)C$, see \cite{hart}, V.1). Then, using (\ref{Adjunction}) and the fact that $\pi_i\geq 1$, $i=1,\dots ,k$ (see Section \ref{AboutWP}), we get
$$D_i.C\geq \sqrt{2\times 2-2}\sqrt{2\pi_i-2}=2\sqrt{\pi_i-1}.$$
This proves the first part of the proposition.

\bigskip

The second part of the proposition follows from the fact that there are no projective irreducible curve $E/\mathbb{F}_q$ of arithmetic genus $1$  lying on a simple abelian surface. If $E$ is absolutely irreducible, this statement is straightforward from Section \ref{AboutWP}. 

Else, let $E_1$ an absolutely irreducible component of $E$. In the light of the Adjunction Formula (\ref{Adjunction}), it is enough to prove that the self-intersection $E^2$ is positive. By Remark \ref{RqIntPos}, we have $E^2\geq E_1^2$, so the statement is also obvious if $E_1$ has arithmetic genus greater than $1$.

Assume that $E_1$ has arithmetic genus $1$. Let $E_2$ be the image of $E_1$ by the Frobenius. By hypothesis, $E_2$ is an absolutely irreducible component of $E$ distinct from $E_1$. We have $E^2\geq E_1.E_2$, so it is enough to prove that $E_1.E_2$ is nonzero. We will use the fact that a divisor on an abelian variety is algebraically equivalent (and therefore numerically equivalent) to its translates (see \cite{mil2}). Let ${E'}_1$ be a translate of $E_1$ passing through zero and ${E'}_2$ be the image of ${E'}_1$ by the Frobenius. Then ${E'}_2$ is clearly a translate of $E_2$, so $E_1.E_2={E'}_1.{E'}_2$. Moreover, ${E'}_1$ has arithmetic genus $1$, so it is not defined over $\mathbb{F}_q$ and thus, ${E'}_1$ is distinct from ${E'}_2$. Therefore we have ${E'}_1.{E'}_2\geq \# ( {E'}_1\cap {E'}_2)(\overline{\mathbb{F}_q})$, and this last intersection is nonempty, since it contains the zero element. This concludes the proof.
\end{proof}

\bigskip

\begin{theo}\label{MinDistJac}
With the notations above, if $J_C$ is simple, then the minimum distance of $C(J_C,G)$ satisfies
$$d\geq \# J_C(\mathbb{F}_q)- \max \left\{\# C(\mathbb{F}_q)+(r^2-1)[2\sqrt{q}],\quad r\# C(\mathbb{F}_q)\right\}.$$
In particular, if
$$r\leq \frac{\# C(\mathbb{F}_q)}{[2\sqrt{q}]}-1,$$
then we have
$$d\geq \# J_C(\mathbb{F}_q)-r\# C(\mathbb{F}_q).$$
\end{theo}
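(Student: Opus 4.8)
The plan is to combine the fundamental inequality (\ref{MinDist}) with the structural results of Proposition \ref{NbComp} and the Weil type bound of Theorem \ref{NbRatPts}. Starting from $d\geq \# J_C(\mathbb{F}_q)-\max_{f}N(f)$, I need an upper bound on $N(f)=\sum_{i=1}^k\# D_{i,f}(\mathbb{F}_q)$ that is valid for every $f\in L(G)\setminus\{0\}$, where the $D_{i,f}$ are the orbit-curves appearing in the decomposition of the effective divisor $D_f=\mathrm{Div}(f)+G$, which is rational over $\mathbb{F}_q$ and linearly equivalent to $G$ (hence numerically equivalent to $rC$). For each component, Theorem \ref{NbRatPts} applied to the abelian surface $A=J_C$ gives $\# D_{i,f}(\mathbb{F}_q)\leq \# C(\mathbb{F}_q)+|\pi_i-2|[2\sqrt{q}]$, since $\tau(J_C)=\# C(\mathbb{F}_q)-q-1$ so that $q+1+\tau(J_C)=\# C(\mathbb{F}_q)$. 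The task then reduces to bounding $\sum_{i=1}^k\big(\# C(\mathbb{F}_q)+|\pi_i-2|[2\sqrt{q}]\big)$ over all admissible families of arithmetic genera $(\pi_i)$ and multiplicities $(n_i)$ constrained by Proposition \ref{NbComp}.

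First I would record the two constraints from Proposition \ref{NbComp}: since $J_C$ is simple, each $\pi_i\geq 2$ and $k\leq r$, while $\sum_{i=1}^k n_i\sqrt{\pi_i-1}\leq r$. Because $\pi_i\geq 2$, the absolute value simplifies to $|\pi_i-2|=\pi_i-2$, so I must maximize
\begin{eqnarray*}
N(f) & \leq & \sum_{i=1}^k\left(\# C(\mathbb{F}_q)+(\pi_i-2)[2\sqrt{q}]\right)\\
 & = & k\,\# C(\mathbb{F}_q)+[2\sqrt{q}]\sum_{i=1}^k(\pi_i-2).
\end{eqnarray*}
The key estimate is to control $\sum_i(\pi_i-2)$ using $\sum_i n_i\sqrt{\pi_i-1}\leq r$. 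Dropping the multiplicities (each $n_i\geq 1$) gives $\sum_i\sqrt{\pi_i-1}\leq r$, and since $\pi_i-2\leq\pi_i-1=(\sqrt{\pi_i-1})^2$, I would bound $\sum_i(\pi_i-1)$ by the square of the sum when the mass is concentrated in a single component: $\sum_i(\pi_i-1)\leq\big(\sum_i\sqrt{\pi_i-1}\big)^2\leq r^2$, whence $\sum_i(\pi_i-2)\leq r^2-k$. Substituting, $N(f)\leq k\,\# C(\mathbb{F}_q)+(r^2-k)[2\sqrt{q}]$. Viewing this as a function of $k$, its coefficient of $k$ is $\# C(\mathbb{F}_q)-[2\sqrt{q}]$, which is positive precisely in the regime of interest, so the maximum over $1\leq k\leq r$ is attained at an extreme value of $k$, yielding the two competing bounds $\# C(\mathbb{F}_q)+(r^2-1)[2\sqrt{q}]$ (at $k=1$, one high-genus component) and $r\,\# C(\mathbb{F}_q)$ (at $k=r$, many genus-$2$ components). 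Taking the maximum of these two gives the stated bound.

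The main obstacle I anticipate is justifying the inequality $\sum_i(\pi_i-1)\leq\big(\sum_i\sqrt{\pi_i-1}\big)^2$ cleanly and verifying that the resulting discrete optimization over $(k,\pi_1,\dots,\pi_k)$ is genuinely dominated by the two endpoint configurations rather than some intermediate one; in particular I must check that for fixed $k$ the sum $\sum_i(\pi_i-2)$ is maximized by concentrating all the genus into one component (consistent with the superadditivity of squaring), and then that the single-variable function of $k$ is monotone so only $k=1$ and $k=r$ compete. For the final ``in particular'' clause, I would simply observe that the hypothesis $r\leq \# C(\mathbb{F}_q)/[2\sqrt{q}]-1$ is equivalent to $r\,\# C(\mathbb{F}_q)\geq \# C(\mathbb{F}_q)+r\cdot r[2\sqrt{q}]\geq \# C(\mathbb{F}_q)+(r^2-1)[2\sqrt{q}]$, so that the second term dominates the maximum and the bound reduces to $d\geq \# J_C(\mathbb{F}_q)-r\,\# C(\mathbb{F}_q)$.
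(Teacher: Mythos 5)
Your overall strategy is the paper's: start from (\ref{MinDist}), bound each $\# D_{i,f}(\mathbb{F}_q)$ by Theorem \ref{NbRatPts}, and then optimize over the genus distribution $(\pi_1,\dots,\pi_k)$ and the number of components $k$ subject to the constraints of Proposition \ref{NbComp}. However, your key estimate has a genuine gap: the relaxation $\sum_i(\pi_i-1)\leq\left(\sum_i\sqrt{\pi_i-1}\right)^2\leq r^2$, hence $\sum_i(\pi_i-2)\leq r^2-k$, is too lossy. It discards the fact that \emph{each} of the $k$ components already consumes one unit of the budget $r$ (since $\pi_i\geq 2$ forces $\sqrt{\pi_i-1}\geq 1$). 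With your bound, the value at $k=r$ of the linear function $k\,\# C(\mathbb{F}_q)+(r^2-k)[2\sqrt{q}]$ is $r\,\# C(\mathbb{F}_q)+(r^2-r)[2\sqrt{q}]$, \emph{not} $r\,\# C(\mathbb{F}_q)$ as you assert; for $r\geq 2$ this is strictly weaker than the theorem's bound, so your argument does not prove the stated inequality. The configuration ``$k=r$ components with total excess genus $r^2-r$'' is allowed by your relaxed constraint but is infeasible under the true constraint $\sum_i\sqrt{\pi_i-1}\leq r$, which at $k=r$ forces every $\pi_i=2$. The paper's fix is exactly the refinement you are missing: apply the superadditivity inequality to the \emph{shifted} variables $s_i=\sqrt{\pi_i-1}-1\geq 0$, for which Proposition \ref{NbComp} gives $\sum_i s_i\leq r-k$. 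Then $\sum_i s_i^2\leq (r-k)^2$ and $\pi_i=s_i^2+2s_i+2$ yield $\sum_i\pi_i\leq (r-k)^2+2r$, i.e. $\sum_i(\pi_i-2)\leq (r-k)^2+2(r-k)$. This produces the quadratic $\phi(k)=k(\# C(\mathbb{F}_q)-2[2\sqrt{q}])+((r-k)^2+2r)[2\sqrt{q}]$, whose positive leading coefficient puts its maximum over $[1;r]$ at an endpoint, with $\phi(1)=\# C(\mathbb{F}_q)+(r^2-1)[2\sqrt{q}]$ and $\phi(r)=r\,\# C(\mathbb{F}_q)$ exactly as the theorem requires.

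There is also a slip in your ``in particular'' clause: the intermediate inequality $r\,\# C(\mathbb{F}_q)\geq \# C(\mathbb{F}_q)+r^2[2\sqrt{q}]$ does not follow from the hypothesis $r\leq \# C(\mathbb{F}_q)/[2\sqrt{q}]-1$ (take $\# C(\mathbb{F}_q)=(r+1)[2\sqrt{q}]$: then $(r-1)\# C(\mathbb{F}_q)=(r^2-1)[2\sqrt{q}]<r^2[2\sqrt{q}]$). The correct deduction is the paper's chain of equivalences: $(r+1)[2\sqrt{q}]\leq \# C(\mathbb{F}_q)$ if and only if $(r^2-1)[2\sqrt{q}]\leq (r-1)\# C(\mathbb{F}_q)$, i.e. $\phi(1)\leq\phi(r)$, which is all that is needed for the maximum to equal $r\,\# C(\mathbb{F}_q)$.
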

\begin{proof}
Given $f\in L(G)\setminus\{ 0\}$, let
$$D_f=\sum_{i=1}^kn_iD_{i,f}$$
be the decomposition of $D_f=\mbox{Div}(f)+G$, as in (\ref{DecompDiv}) and let $\pi_1,\dots ,\pi_k$ the respective arithmetic genera of $D_{1,f},\dots ,D_{k,f}$. We have $\pi_i\geq 2$, $i=1,\dots ,k$ (see Proposition \ref{NbComp}). Applying Proposition \ref{NbRatPts} to each $D_{i,f}$, we get
\begin{eqnarray}\label{MajNf}
N(f)= \sum_{i=1}^k\# D_{i,f}(\mathbb{F}_q)
 & \leq &  k(\# C(\mathbb{F}_q)-2[2\sqrt{q}])+[2\sqrt{q}]\sum_{i=1}^k\pi_i
\end{eqnarray}
We start by giving an upper bound for $\sum_{i=1}^k\pi_i$, depending on $k$. For $i=1,\dots ,k$, set $$s_i=\sqrt{\pi_i-1}-1.$$ 
The $s_i$'s are non-negative real numbers and thus
$$\sum_{i=1}^ks_i^2  \leq  \left( \sum_{i=1}^ks_i\right)^2.$$
Moreover, using Proposition \ref{NbComp}, we get
$$\sum_{i=1}^ks_i = \left(\sum_{i=1}^k\sqrt{\pi_i-1}\right) -k \leq r-k.$$
Therefore, since $\pi_i=(s_i+1)^2+1=s_i^2+2s_i+2$, $i=1,\dots ,k$, we have
\begin{eqnarray*}
\sum_{i=1}^k\pi_i & = & \sum_{i=1}^ks_i^2+2\sum_{i=1}^ks_i+2k\\
 & \leq & \left( \sum_{i=1}^ks_i\right)^2+2\sum_{i=1}^ks_i+2k\\
 & \leq & (r-k)^2+2(r-k)+2k\\
 & = & (r-k)^2+2r.
\end{eqnarray*}
Now, by combining this last inequality with (\ref{MajNf}), we get
\begin{eqnarray}\label{Maj2Nf}
N(f) & \leq &  k(\# C(\mathbb{F}_q)-2[2\sqrt{q}])+((r-k)^2+2r)[2\sqrt{q}].
\end{eqnarray}
The right hand side of (\ref{Maj2Nf}) defines a function $\phi$ of $k$ on the closed interval $[1;r]$, which is a polynomial of degree $2$ with positive leading coefficient, and therefore takes its maximum value at an extremity of its domain. In other words, we have
\begin{eqnarray*}
N(f) & \leq &  \max \{\phi (1),\phi (r)\}
\end{eqnarray*}
where
\begin{eqnarray*}
\phi (1) & = & \# C(\mathbb{F}_q)-2[2\sqrt{q}]+((r-1)^2+2r)[2\sqrt{q}]=\# C(\mathbb{F}_q)+(r^2-1)[2\sqrt{q}]\\
\phi (r) & = & r(\# C(\mathbb{F}_q)-2[2\sqrt{q}])+2r[2\sqrt{q}]=r\# C(\mathbb{F}_q).
\end{eqnarray*}
Then the result follows from (\ref{MinDist}).

To conclude, notice that we have
\begin{eqnarray*}
\phi (1)\leq \phi (r) & \Leftrightarrow & (r^2-1)[2\sqrt{q}]\leq (r-1)\# C(\mathbb{F}_q)\\
 & \Leftrightarrow & (r+1)[2\sqrt{q}]\leq \# C(\mathbb{F}_q)\\
 & \Leftrightarrow & r\leq \frac{\# C(\mathbb{F}_q)}{[2\sqrt{q}]}-1.
\end{eqnarray*}
\end{proof}

%%%%%%%%%%%%%%%%%%%%%%%%%%%%%%%%%%%%%%%%%%%%%%%%%%%%%%%
%%%%%%%%%%%%%%%%%%%%%%%%%%%%%%%%%%%%%%%%%%%%%%%%%%%%%%%

\vskip1cm

{\bf Acknowledgments:}
I would like to thank Peter Beelen for suggesting this work and for helpful discussions.
This work was supported by the Danish-Chinese Center for Applications of 
Algebraic Geometry in Coding Theory and Cryptography.

%%%%%%%%%%%%%%%%%%%%%%%%%%%%%%%%%%%%%%%%%%%%%%%%%%%%%%%
%%%%%%%%%%%%%%%%%%%%%%%%%%%%%%%%%%%%%%%%%%%%%%%%%%%%%%%

%%%%%%%%%%%%%%%%%%%%%%%%%%%%%%%%%%%%%%%%%%%%%%%%%%%%%%%

\begin{thebibliography}{99}

 %\bibitem{AHL}
%Y. Aubry, S. Haloui, G. Lachaud. On the number of points on abelian and Jacobian varieties over finite fields.
%\emph{Acta Arithmetica}.
%160 (2013), no. 3, 201--242.

 \bibitem{AubPer}
Y. Aubry, M. Perret. A Weil theorem for singular curves.
\emph{Arithmetic, geometry and coding theory (Luminy, 1993)}.
1-7, de Gruyter, Berlin, 1996.


 \bibitem{hart}
R. Hartshorne. \emph{Algebraic geometry}.
GTM 52, Springer, 1977.

% \bibitem{hond}
%T. Honda. Isogeny classes of abelian varieties over finite fields.
%\emph{Journ. Math. Soc. Japan}.
% Vol 20, p. 83-95, 1968.

%  \bibitem{L-MD}
% G. Lachaud, M. Martin-Deschamps. Nombre de points des jacobiennes sur un corps fini.
%\emph{Acta Arith.} {\bf 16} (1990), 329-340.

\bibitem{lit}
J.B. Little. Algebraic geometry codes from higher dimensional varieties.
In \emph{Advances in algebraic geometry codes}, E. Mart\'inez-Moro, C. Munuera, D. Ruano, eds. Series on Coding Theory and Cryptography, Vol. 5 (2008).

%\bibitem{mana}
%D. Maisner, E. Nart, appendix of E. W. Howe. Abelian surfaces over finite fields as jacobians.
%\emph{Experiment. Math.}.
%Vol 11, p. 321-337, 2002.


%\bibitem{mil}
%J.S. Milne. Abelian varieties. Lectures notes, version 2.0, 2008.
%On {\ttfamily http://www.jmilne.org/math/}.

\bibitem{mil2}
J.S. Milne. Abelian Varieties.
In \emph{Arithmetic geometry}, G. Cornell, J.H. Silverman, eds. Springer-Verlag, 1986.


\bibitem{mil3}
J.S. Milne. Jacobian Varieties.
In \emph{Arithmetic geometry}, G. Cornell, J.H. Silverman, eds. Springer-Verlag, 1986.


%\bibitem{milwat}
%J. Milne, W. Waterhouse. Abelian varieties over finite fields.
%\emph{1969 Number Theory Institute}, Proceedings of Symposia in Pure Math., Vol XX, p.53-64, A.M.S., Providence, RI, 1971.
%Vol 76, p. 351-366, 1990.

\bibitem{mum}
D. Mumford.
\emph{Abelian varieties}.
Tata Institute of Fundamental Research Studies in Math., Vol 5, Oxford University Press, London, 1970.


%\bibitem{per}
%M. Perret. Number of points of Prym varieties over finite fields.
%\emph{Glasgow Math. J.}.
%Vol 48, p. 275-280, 2006.

%\bibitem{ruck}
%H.G. R\"{u}ck. Abelian surfaces and Jacobian varieties over finite fields.
%\emph{Compositio Math.}.
%Vol 76, p. 351-366, 1990.

\bibitem{serre0}
J.-P. Serre. Sur le nombre des points rationnels d'une courbe alg\'ebrique sur un corps fini.
\emph{C. R. Acad. Sci. Paris} {\bf 296} (1983), s\'erie I, 397-402.

\bibitem{sha}
I.R. Shafarevich.
\emph{Basic algebraic geometry 1}.
2nd revisited edition, Springer-Verlag, 1994.


%\bibitem{sti}
%Stichtenoth

% \bibitem{tate}
%J. Tate.  Endomorphisms of abelian varieties over finite fields.
%\emph{Invent. Math. }.
% Vol 2, p. 134-144, 1966.

%  \bibitem{tate2}
%J. Tate. Classes d'isogénies de variétés abéliennes sur un corps fini (d'après T. Honda).
%\emph{Sém. Bourbaki}.
 %Vol 21, 1968/69.

%\bibitem{VlNo}
%S. G. Vl\u{a}du\c{t}, Y. I. Manin. Linear codes and modular curves.
%\emph{Current problems in mathematics},
%Vol. 25, 209-257, Itogi Nauki i Tekhniki, Akad. Nauk SSSR, Vsesoyuz. Inst. Nauchn. i Tekhn. Inform., Moscow, 1984.


%\bibitem{water}
%W.C. Waterhouse. Abelian varieties over finite fields.
%\emph{Ann. Sc. E.N.S.},
%(4), 2, 1969, 521-560.



\end{thebibliography}
\end{document}